\newtheorem{definition}{Definition}
\newtheorem{lemma}{Lemma}
\newtheorem{remark}{Remark}
\newtheorem{corollary}{Corollary}
\newtheorem{proposition}{Proposition}
\newcommand {\bsis} {\left\{ \begin{array} }
\newcommand {\esis} {\end{array}\right.}
\def\bmat#1{\left[\begin{array}{#1}}
	\def\emat{\end{array}\right]}
\def\diag{\texttt{diag}}
\def\({\left(}
\def\){\right)}
\def\vv#1{{ \rm \bf{#1}}}
\def\Id{{\rm{I}}}
\def\R{\mathbb{R}}    
\newcommand {\T}{^{\top}} 
\newcommand{\blista}{\renewcommand{\labelenumi}{(\roman{enumi})} 
	\begin{enumerate}}
	\newcommand{\elista}{\end{enumerate} \renewcommand{\labelenumi}{\arabic{enumi}.}}
\def\LBx{\underline{x}}
\def\UBx{\overline{x}}
\def\LBu{\underline{u}}
\def\UBu{\overline{u}}
\def\LBv{\underline{v}}
\def\UBv{\overline{v}}
\def\eLBx{\underline{x}_\varepsilon}
\def\eUBx{\overline{x}_\varepsilon}
\def\eLBu{\underline{u}_\varepsilon}
\def\eUBu{\overline{u}_\varepsilon}
\begin{document}

\title{\LARGE Efficient implementation of MPC for tracking using ADMM by decoupling its semi-banded structure} 

\author{Victor~Gracia$^{\dagger}$,~Pablo~Krupa$^\star$,~Daniel~Limon$^{\dagger}$,~Teodoro~Alamo$^{\dagger}$
    \thanks{$^\dagger$ Universidad de Sevilla, Seville, Spain.}
    \thanks{$^\star$ Gran Sasso Science Institute (GSSI), L'Aquila, Italy.}
    \thanks{This work was supported by grant PID2022-142946NA-I00 funded by MICIU/AEI/ 10.13039/501100011033 and by ERDF/EU.}
    \thanks{V. Gracia acknowledges support from grant PID2022-142946NA-I00 funded by MICIU/AEI/ 10.13039/501100011033 and by ESF+.}
    \thanks{P. Krupa acknowledges support from the MUR-PRO3 project on Software Quality and the MUR-PRIN project DREAM (20228FT78M).}
    \thanks{Corresponding author: Victor Gracia. E-mails: \texttt{vgracia@us.es}, \texttt{pablo.krupa@gssi.it}, \texttt{talamo@us.es}, \texttt{dlm@us.es}.}%
}

\maketitle
\pagestyle{fancy}
\thispagestyle{fancy}

\begin{abstract}
Model Predictive Control (MPC) for tracking formulation presents numerous advantages compared to standard MPC, such as a larger domain of attraction and recursive feasibility even when abrupt changes in the reference are produced. As a drawback, it includes some extra decision variables in its related optimization problem, leading to a semi-banded structure that differs from the banded structure encountered in standard MPC. This semi-banded structure prevents the direct use of the efficient algorithms available for banded problems.  To address this issue, we present an algorithm based on the alternating direction method of multipliers that explicitly takes advantage of the underlying semi-banded structure of the MPC for tracking.
\end{abstract}

\begin{IEEEkeywords}
Model predictive control, embedded optimization, embedded systems, ADMM, MPC for tracking.
\end{IEEEkeywords}

\section{Introduction} \label{sec:introduction}

Model Predictive Control (MPC) is an advanced control policy whose control action is obtained from a constrained Optimization Problem (OP) posed at every sample time \cite{camacho2013model, rawlings2017model}.
MPC has become widely popular due to its ability to optimize the plant operation performance while dealing with constraints.
However, it presents disadvantages, such as the computational cost required to solve its associated OP at each sample time, or the unavailability of a suitable control action when the OP is infeasible or if it cannot be solved in a short amount of time compared to the sample time of the system.

Recently, there has been a significant amount of academic literature providing results which mitigate these issues, such as the proposal of efficient solvers suitable for the implementation of MPC \cite{Krupa_TCST_20, stellato2020osqp,odonoghue_SCS,Richter12,Patrinos14}, or results which palliate the problem of MPC infeasibility \cite{LIMON20082382, kohler2022globally}. 

In particular, in this article we focus on the \textit{MPC for tracking} (MPCT) formulation, originally proposed in~\cite{LIMON20082382}, which introduces an artificial reference as an additional decision variable of the OP.
The main benefit of introducing this artificial reference is that MPCT attains a notably larger domain of attraction and feasibility region when compared with standard MPC formulations.
Additionally, MPCT guarantees recursive feasibility, even when sudden changes in the reference occur.
Furthermore, it also guarantees asymptotic stability to an admissible steady state of the system, even if the reference is infeasible.
These benefits make MPCT a strong candidate for its use in a practical setting, since it significantly mitigates some of the main issues of standard MPC.
However, these benefits come at the cost of a more complex OP due to the addition of the artificial reference.
In particular, the inclusion of the additional decision variables leads to a semi-banded structure in the MPCT OP, whereas standard MPC formulations present a banded structure than can be exploited by the optimization solver \cite{Krupa_TCST_20, wang2009fast}. 

In \cite{krupa2021implementation}, the authors propose an efficient solver for MPCT where the banded structure of standard MPC formulations is recovered by using the Extended Alternating Direction Method of Multipliers (EADMM) \cite{cai2017convergence}.
This provides a solver whose computational cost per iteration is nearly identical to the one for standard MPC formulations using first-order optimization methods such as (non-extended) ADMM~\cite{MAL-016}.
However, the disadvantage is that EADMM presents several drawbacks when compared with ADMM, both theoretical and in terms of its practical performance.

In this article we present an alternative way of solving MPCT using the ADMM algorithm by decomposing the semi-banded structure of the MPCT OP.
This decomposition recovers the same banded structure associated with the standard MPC formulation, which can thus be exploited in the numerical solver.
The computational cost per iteration of the resulting solver is over two times larger than the one for the EADMM solver proposed in \cite{krupa2021implementation}.
However, in spite of this, the use of the ADMM algorithm instead of EADMM provides better theoretical guarantees and practical performance, as illustrated by the numerical results.

This article is structured as follows.
Section \ref{sec:mpct_formulation} introduces the MPCT formulation.
Section \ref{sec:ADMM_mpct_formulation} presents the proposed ADMM algorithm for MPCT, where we show how we decompose its semi-banded structure to attain an efficient solver.
Section \ref{sec:numerical_results} shows numerical results demonstrating the practical benefits of the proposed solver.
Finally, Section \ref{sec:conclusion} summarizes the main results of the article.

\subsubsection*{Notation}
Given a square matrix $A$, $\det(A)$ is its determinant.
Given two integers $a, b$, $\mathbb{I}_a^b = \{a, a + 1, \dots, b - 1, b \}$.
$\mathcal{S}_{\succ}^{n}$ denotes the set of symmetric positive definite matrices of size $n \times n$.
Given a vector $x$, we denote its $j$-th component as $x_{(j)}$.
Given $Q\in \mathcal{S}_{\succ}^{n}$,  $\| x \|_{Q} \doteq \sqrt{x\T Q x}$ and $\| x \|_\infty \doteq \max_{j = 1 \dots n}{| x_{(j)} |}$.
The identity matrix of dimension~$n$ is denoted by $\Id_{n}$ and the vector of ones by $\mathds{1}_n \in \R^n$ (we~may drop the sub-index $n$ if the dimension is clear from the context).
Given $x, y \in \R^n$, $x \leq (\geq) \; y$ denotes component-wise inequalities.
For vectors $x_1$ to $x_N$ of any dimension, $(x_{1}, \dots, x_{N})$ denotes the column vector formed by their concatenation.
We denote by $\diag(A_1, \dots, A_N)$ the block diagonal matrix formed by the concatenation of scalars and/or matrices $A_1$ to $A_N$ (possibly of different dimensions).
Function $\max(\cdot) \colon \R \times \R \times \dots \times \R \rightarrow \R$ returns the maximum element of its scalar inputs.
The Kronecker product between matrices $A$ and $B$ is denoted by $A \otimes B$.

\section{MPC for tracking formulation} \label{sec:mpct_formulation}

Consider a controllable discrete-time system described by
\begin{equation}\label{pred_model}
x(t+1) = Ax(t)+Bu(t),
\end{equation}
where $x(t) \in \R^{n_x}$ and $u(t) \in \R^{n_u}$ are the state and input at sample time $t$, respectively, subject to box constraints
\begin{subequations} \label{box_limits}
	\begin{align}
		\LBx & \leq  x(t) \leq \UBx,\\
		\LBu & \leq  u(t) \leq \UBu,
	\end{align}
\end{subequations}
where $\LBx, \UBx \in \R^{n_x}$ and $\LBu, \UBu \in \R^{n_u}$ satisfy $\LBx < \UBx$ and $\LBu< \UBu$. 
The control objective is to steer the system to the steady-state reference $(x_r,u_r)$ while satisfying the system constraints.
If $(x_r,u_r)$ is admissible, then the closed-loop system should converge to it.
Otherwise, we wish to converge to some admissible steady-state that is close to $(x_r,u_r)$.

This control objective can be achieved by using the \textit{MPC for tracking} (MPCT) proposed in \cite{LIMON20082382}. In particular, we address the MPCT formulation with terminal equality constraint from \cite{krupa2021implementation}, whose OP is given by
\begin{subequations}\label{MPCT_formulation}
	\begin{align}
		\min_{\substack{\vv{x,u}, \\x_s,u_s}} \; & \sum_{i=0}^{N-1}(\|x_{i}{-}x_{s}\|_{Q}^2 {+} \|u_{i}{-}u_{s}\|_{R}^2) {+} \|x_{s}{-}x_{r}\|_{T}^2 {+} \|u_{s}{-}u_{r}\|_{S}^2  \\ 
		\rm s.t. \; & x_{0} = x(t),  \label{initial_constraint}\\
		& x_{i+1} = Ax_{i} + Bu_{i}, \ i \in \mathbb{I}_{0}^{N-2},\\
		& x_{s} = Ax_{N-1} + Bu_{N-1},\\
        & x_{s} = Ax_{s} + Bu_{s},\\
        & \LBx \leq x_{i} \leq \UBx, \ i \in \mathbb{I}_1^{N-1}, \label{ineq_ini}\\
        & \LBu \leq u_{i} \leq \UBu, \ i \in \mathbb{I}_0^{N-1},\\
		& \eLBx \leq x_{s} \leq \eUBx, \label{ineq_pre_final}  \\
		& \eLBu \leq u_{s} \leq \eUBu, \label{ineq_final}
	\end{align}
\end{subequations}
where the decision variables are the artificial reference $(x_{s}, u_{s})$ and the predicted states $\vv{x} = (x_{0}, x_{1}, \dots, x_{N-1})$ and inputs $\vv{u} = (u_{0}, u_{1}, \dots, u_{N-1})$ along the prediction horizon $N$; $x(t)$ is the current state of the system at sample time $t$; the matrices $Q \in \mathcal{S}_{\succ}^{n_x}$, $R \in \mathcal{S}_{\succ}^{n_u}$, $T \in \mathcal{S}_{\succ}^{n_x}$ and $S \in \mathcal{S}_{\succ}^{n_u}$ are the cost function matrices; and given the arbitrarily small scalar $\varepsilon{>}0$, 
$\eLBx = \LBx+\varepsilon \mathds{1}_{n_x}$,
$\eUBx = \UBx-\varepsilon \mathds{1}_{n_x}$,
$\eLBu = \LBu+\varepsilon \mathds{1}_{n_u}$ and $\eUBu = \UBu-\varepsilon \mathds{1}_{n_u}$. The $\varepsilon$-tightened constraints \eqref{ineq_pre_final} and \eqref{ineq_final} are considered to avoid a possible controllability loss if any constraint is active at the equilibrium point \cite{LIMON20082382}.

The MPCT formulation \eqref{MPCT_formulation} has several advantages with respect to standard MPC \cite{camacho2013model}, such as guaranteed recursive feasibility under nominal conditions, i.e., when controlling the model used as prediction model with no disturbances, or asymptotic stability to the admissible steady state $(\hat{x}, \hat{u})$ that minimizes $\|\hat{x} - x_{r}\|_{T}^2 + \|\hat{u} - u_{r}\|_{S}^2$~\cite{LIMON20082382}.
However, the inclusion of $(x_s, u_s)$ leads to a more complex OP than the one of standard MPC, as the  banded structure that arises when solving the OP of  MPC is lost in \eqref{MPCT_formulation}. 
We note that the banded structure of standard MPC is crucial for the implementation of efficient solvers \cite{Krupa_TCST_20, wang2009fast}.
Thus, even though MPCT only adds $n_x + n_u$ extra decision variables with respect to standard MPC, the time required to solve \eqref{MPCT_formulation} can be notably higher if a naive approach is used to solve the OP, e.g., if non-sparse matrices are used when solving the OP.
In the following section we present an efficient ADMM-based solver for \eqref{MPCT_formulation}.

\section{Efficiently solving MPCT using ADMM} \label{sec:ADMM_mpct_formulation}

We now show how to efficiently solve \eqref{MPCT_formulation} using the ADMM algorithm \cite{MAL-016} by decomposing its most computationally expensive step into several simple-to-solve steps.
We start by describing the version of the ADMM we consider.

\subsection{Alternating Direction Method of Multipliers}

Let $f:\mathbb{R}^{n_z} \rightarrow (-\infty,\infty]$ and $g:\mathbb{R}^{n_z} \rightarrow (-\infty,\infty]$ be proper, closed and convex functions, $z$, $v \in \mathbb{R}^{n_z}$, and $C$, $D \in \mathbb{R}^{n_z \times n_z}$. Consider the OP
\begin{subequations}\label{ADMM_problem}
	\begin{align}
		\min_{z,v} & \quad f(z) + g(v) \\
		s.t. & \quad Cz+Dv=0,
	\end{align}
\end{subequations}
with augmented Lagrangian $\mathcal{L}_\rho : \mathbb{R}^{n_z} \times \mathbb{R}^{n_z} \times\mathbb{R}^{n_z} \rightarrow \mathbb{R}$,
\begin{equation*}
	\mathcal{L}_\rho(z,v,\lambda) = f(z) + g(v) + \lambda\T (Cz+Dv) + \frac{\rho}{2}\|Cz+Dv\|_2^2,
\end{equation*}
where $\lambda \in \mathbb{R}^{n_z}$ is the vector of dual variables and the scalar $\rho>0$ is the penalty parameter. We denote a solution of (\ref{ADMM_problem}) by ($z^*, v^*, \lambda^*$), provided that one exists.

Starting from an initial point $(v^0,\lambda^0)$, ADMM, shown in Algorithm~\ref{ADMM_algorithm}, returns a suboptimal solution $(\tilde{z}^*,\tilde{v}^*,\tilde{\lambda}^*)$ of \eqref{ADMM_problem}, where suboptimality is determined by the choice of the primal and dual exit tolerances $\epsilon_p>0$ and $\epsilon_{d}>0$ \cite[\S 3.3]{MAL-016}.

\begin{algorithm}[t]
	\DontPrintSemicolon
	\caption{ADMM} \label{ADMM_algorithm}
	\Require{$v^{0}$, $\lambda^{0}$, $\rho>0$, $\epsilon_{p}>0$, $\epsilon_{d}>0$}
	$k \gets 0$\;
	\Repeat{{$\|Cz^{k+1}{+}Dv^{k+1}\|_{\infty} {\leq} \epsilon_{p}$ and $\|v^{k+1}{-}v^{k}\|_{\infty} {\leq} \epsilon_{d}$}}{
		$z^{k+1} \gets \displaystyle \arg \min_{z} \mathcal{L}_{\rho}(z, v^{k}, \lambda^{k})$\; \label{step_z_ADMM}
		$v^{k+1} \gets \displaystyle \arg \min_{v} \mathcal{L}_{\rho}(z^{k+1}, v, \lambda^{k})$\; \label{step_v_ADMM}
		$\lambda^{k+1} \gets \lambda^{k} + \rho(Cz^{k+1} + Dv^{k+1})$\;
		$k \gets k+1$\;
	}		
	\KwOut{$\tilde{z}^{*} \gets z^{k}$, $\tilde{v}^{*} \gets v^{k}$, $\tilde{\lambda}^{*} \gets \lambda^{k}$}
\end{algorithm}

\subsection{Applying ADMM to MPCT}

For $y, \underline{y}, \overline{y} \in \R^{n_y}$, $\hat{G} \in \R^{m_y \times n_y}$, $\hat{b}\in \R^{m_y}$, let us define
\begin{equation*}
    \begin{aligned}
    \mathcal{I}_{[\underline{y},\overline{y}] }(y) &= 
\begin{cases}
	0,& \text{if} \ \underline{y} \leq y \leq \overline{y},\\
	+\infty,& \text{otherwise},
\end{cases} \\
    \mathcal{I}_{(\hat{G} y = \hat{b})}(y) &= 
\begin{cases}
	0,& \text{if} \ \hat{G} y = \hat{b},\\
	+\infty,& \text{otherwise}.
\end{cases}
\end{aligned}
\end{equation*}

Problem \eqref{MPCT_formulation} can be posed as \eqref{ADMM_problem} by taking $C = \Id$, $D=-\Id$,
\begin{equation*}
    z = (x_0, u_0, x_1, u_1, \dots, x_{N-1}, u_{N-1}, x_s, u_s),
\end{equation*}
and
\begin{equation*}
v = (\tilde{x}_0, \tilde{u}_0, \tilde{x}_1, \tilde{u}_1, \dots, \tilde{x}_{N-1}, \tilde{u}_{N-1}, \tilde{x}_s, \tilde{u}_s)
\end{equation*}
as a copy of the decision variables of \eqref{MPCT_formulation}, leading to
\begin{subequations}
\begin{align}
    f(z) &= \frac{1}{2} z\T H z + q\T z + \mathcal{I}_{(Gz=b)}(z), \label{eq:ADMM:MPCT:f} \\
    g(v) &= \mathcal{I}_{[\LBv,\UBv]}(v) = \mathcal{I}_{[\eLBx,\eUBx]} (\tilde{x}_s) + 
\mathcal{I}_{[\eLBu,\eUBu] } (\tilde{u}_s) \label{eq:ADMM:MPCT:g} \\
&\quad+ \sum_{i=1}^{N-1} \mathcal{I}_{[\LBx,\UBx]}( \tilde{x}_i) +  \sum_{i=0}^{N-1} \mathcal{I}_{[\LBu,\UBu]}( \tilde{u}_i), \nonumber
\end{align}
\end{subequations}
where $q = -(0, 0, \dots, 0, Tx_r, Su_r)$, $b = (x(t) , \dots, 0)$,
\begin{subequations} \label{eq:ADMM:MPCT:ingredients}
\begin{align}
		H &=\begin{bmatrix}
				Q & 0 & \cdots & -Q & 0\\
				0 & R & \cdots & 0 & -R\\
				0 & 0 & \ddots & \vdots & \vdots\\
				-Q & 0 & \cdots & NQ+T & 0\\
				0 & -R & \cdots & 0 & NR+S
			\end{bmatrix}, \label{eq:ADMM:MPCT:ingredients:H} \\
 		G &=\begin{bmatrix}
 				\Id & 0 & 0 & 0 & \cdots & 0\\
 				A & B & -\Id & 0 & \cdots &  0\\
 				0 & \ddots & \ddots & \ddots & 0 & \vdots\\
 				0 & 0 & A & B & -\Id & 0\\
 				0 & 0 & 0 & 0 & (A-\Id) & B
 			\end{bmatrix}, \label{eq:ADMM:MPCT:ingredients:G} \\
        \LBv &\doteq (\LBx, \LBu, \dots, \LBx, \LBu, \eLBx, \eLBu), \\
        \UBv &\doteq (\UBx, \UBu, \dots, \UBx, \UBu, \eUBx, \eUBu).
 \end{align}
\end{subequations}

With these elements, Step~\ref{step_z_ADMM} of Algorithm~\ref{ADMM_algorithm} consists of solving a quadratic program subject to equality constraints, constituting the main computational load when solving MPCT with ADMM.
On the other hand, Step~\ref{step_v_ADMM} of the algorithm requires solving a simple separable convex problem (i.e., solving $n_z$ simple scalar OPs).
The next subsections are devoted to explaining how these steps are computed efficiently.

\subsection{Efficient computation of $z^{k+1}$}

Variable $z^{k+1}$ updated in Step~\ref{step_z_ADMM} of Algorithm~\ref{ADMM_algorithm} applied to problem \eqref{eq:ADMM:MPCT:f} is obtained from the optimal solution of
\begin{subequations} \label{z_problem}
    \begin{align}
\min_{z} & \; \frac{1}{2} z\T P z + p\T z \\
\rm  s.t. & \; Gz=b,
    \end{align}
\end{subequations}
where $P = H + \rho \Id$ and $p = q + \lambda^k - \rho v^k$.

As shown in the following proposition, problem \eqref{z_problem} can be solved by posing a linear system of equations describing its Karush-Kuhn-Tucker optimality conditions.

\begin{proposition}[{\cite[\S 5.5.3]{boyd2004convex}}]
Consider the OP~\eqref{z_problem}, where $P \in \mathbb{R}^{n_z \times n_z}$ is positive semi-definite, $p \in \mathbb{R}^{n_z}$, $G \in \mathbb{R}^{m_z \times n_z}$ and $ b \in \mathbb{R}^{m_z}$.
A vector $z^* \in \mathbb{R}^{n_z}$ is an optimal solution of this problem if and only if there exists a vector $\mu \in \mathbb{R}^{m_z}$ such that
\begin{subequations}\label{KKT_conditions_eqQP}
	\begin{align}
        &G z^* = b,\\
        &P z^* + G\T \mu + p = 0.
    \end{align}
 \end{subequations}
\end{proposition}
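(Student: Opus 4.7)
The plan is to verify that the KKT conditions \eqref{KKT_conditions_eqQP} are both necessary and sufficient for optimality of \eqref{z_problem}. Since $P$ is positive semi-definite, the objective $\frac{1}{2}z\T P z + p\T z$ is convex, and the constraint $Gz = b$ is affine, so \eqref{z_problem} is a convex program whose constraints automatically satisfy the linear constraint qualification. This places us in the standard setting where stationarity of the Lagrangian plus primal feasibility exactly characterizes optimality.

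For \textbf{sufficiency}, I would introduce the Lagrangian
\[
    L(z,\mu) = \tfrac{1}{2} z\T P z + p\T z + \mu\T (Gz - b),
\]
and observe that for any fixed $\mu$, $L(\cdot,\mu)$ is convex in $z$ because $P \succeq 0$. The second KKT condition $P z^* + G\T \mu + p = 0$ is precisely $\nabla_z L(z^*,\mu) = 0$, so $z^*$ is an unconstrained minimizer of $L(\cdot,\mu)$. Then for any feasible $\tilde z$ (i.e.\ any $\tilde z$ with $G\tilde z = b$),
\[
    \tfrac{1}{2} \tilde z\T P \tilde z + p\T \tilde z = L(\tilde z, \mu) \geq L(z^*,\mu) = \tfrac{1}{2} (z^*)\T P z^* + p\T z^*,
\]
where the last equality uses the first KKT condition $G z^* = b$. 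Together with $G z^* = b$ itself, this shows $z^*$ is optimal.

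For \textbf{necessity}, suppose $z^*$ solves \eqref{z_problem}. Primal feasibility $G z^* = b$ is immediate. Because the constraint is purely affine, the tangent cone at $z^*$ equals $\ker G$, and first-order optimality for a differentiable convex objective on an affine subspace gives $\nabla \big(\tfrac{1}{2} z\T P z + p\T z\big)\big|_{z=z^*} = P z^* + p \in (\ker G)^\perp = \operatorname{range}(G\T)$. Hence there exists $\mu \in \R^{m_z}$ with $P z^* + p = -G\T \mu$, which is exactly the second condition in \eqref{KKT_conditions_eqQP}.

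There is no real technical obstacle here; the only point requiring care is justifying the existence of the multiplier $\mu$ in the necessity direction, which is immediate because the constraints are affine so no Slater-type qualification is needed. The result can equivalently be quoted directly from \cite[\S 5.5.3]{boyd2004convex}.
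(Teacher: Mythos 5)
Your proof is correct. Note that the paper itself gives no proof of this proposition: it is stated as a quoted result with the citation to \cite[\S 5.5.3]{boyd2004convex} serving in place of an argument. What you have written is precisely the standard argument from that reference --- sufficiency via the fact that the stationarity condition makes $z^*$ an unconstrained minimizer of the convex Lagrangian, which coincides with the objective on the feasible affine set, and necessity via $P z^* + p \in (\ker G)^\perp = \operatorname{range}(G\T)$, where the existence of the multiplier needs no constraint qualification because the constraints are affine. So your proposal fills in, correctly and along the expected lines, a proof the authors chose to outsource.
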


As shown in \cite{krupa2021tesis}, simple algebraic manipulations of \eqref{KKT_conditions_eqQP} along with the definition of matrix $W \doteq GP^{-1}G^\top$ lead to the alternative form
\begin{subequations} \label{KKT_conditions_eqQP_transformed}
\begin{align}
    &P \xi = p, \label{KKT_xi_computation}\\
    &W \mu = -(G \xi + b), \label{KKT_mu_computation}\\
    &P z^*= - (G\T \mu + p), \label{KKT_z_computation}
\end{align}
\end{subequations}
from where the optimal solution $z^*$ of \eqref{z_problem}, and thus the update $z^{k+1}$ of Step~\ref{step_z_ADMM} of Algorithm~\ref{ADMM_algorithm}, can be obtained.
Solving \eqref{KKT_conditions_eqQP_transformed} is the main computational burden of Algorithm~\ref{ADMM_algorithm}.
Thus, we wish to solve the three linear systems efficiently.
However, matrices $P$ and $W$ are semi-banded due to the semi-banded structure of $H$ shown in \eqref{eq:ADMM:MPCT:ingredients:H}.
The following definition formalizes the notion of a \textit{semi-banded} matrix.

\begin{definition} \label{def:semi}
Given the non-singular matrix $M\in \mathbb{R}^{n\times n}$ and vector $d \in \R^{n}$, we say that the linear system
\begin{equation} \label{eq:system:semi}
    M z = d,
\end{equation}
is {\it semi-banded}  if there exists a non-singular banded matrix $\Gamma \in \R^{n \times n}$, and $U \in \R^{n \times m}$ and $V \in \R^{m \times n}$ satisfying
\begin{equation} \label{eq:system:semi:decomposition}
    M = \Gamma + U V,
\end{equation}
where the dimension $m$ is assumed to be significantly smaller than the dimension of $M$, i.e., $m \ll n$.
\end{definition}

A naive approach to solving the three linear systems~\eqref{KKT_conditions_eqQP_transformed} will generally be computationally expensive.
However, we now show how the decomposition \eqref{eq:system:semi:decomposition} can be used to solve \eqref{KKT_conditions_eqQP_transformed} efficiently.
We start by showing that, indeed, matrices $P$ and $W$ in \eqref{KKT_conditions_eqQP_transformed} are semi-banded, providing explicit values for their decomposition \eqref{eq:system:semi:decomposition} in the following proposition, which makes use of the well-known Woodbury matrix identity \cite{1457855}.

\begin{lemma}[Woodbury matrix identity] \label{lemma:Woodbury}
Let $\Gamma \in \mathbb{R}^{n \times n}$ be non-singular. Then, if $\Id + V \Gamma^{-1} U$ is non-singular, $\Gamma + U V$ is also non-singular and its inverse is given by
\begin{equation} 
    (\Gamma + U V)^{-1} = \Gamma^{-1} - \Gamma^{-1} U ( \Id + V \Gamma^{-1} U)^{-1} V \Gamma^{-1}.\label{eq:Woodbury}
\end{equation}
\end{lemma}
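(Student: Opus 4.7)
The plan is to verify the claimed inverse by direct multiplication: I will show that multiplying $\Gamma + UV$ on the right by the candidate expression $\Gamma^{-1} - \Gamma^{-1} U ( \Id + V \Gamma^{-1} U)^{-1} V \Gamma^{-1}$ yields the identity. Once this product is established, both assertions of the lemma follow simultaneously: the invertibility of $\Gamma + UV$ is a byproduct of the explicit right-inverse, and since $\Gamma + UV$ is square, the right-inverse must coincide with the two-sided inverse.

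Carrying this out, I would first distribute the product to obtain
\begin{equation*}
(\Gamma + UV)\Gamma^{-1} - (\Gamma + UV)\Gamma^{-1} U ( \Id + V \Gamma^{-1} U)^{-1} V \Gamma^{-1}.
\end{equation*}
The first term simplifies to $\Id + UV\Gamma^{-1}$. The matrix premultiplying $(\Id + V\Gamma^{-1}U)^{-1} V\Gamma^{-1}$ in the second term is $U + UV\Gamma^{-1}U$, which I would factor as $U(\Id + V\Gamma^{-1}U)$. This is the critical algebraic manoeuvre: once this factorization is made, the factor $\Id + V\Gamma^{-1}U$ cancels against its assumed inverse, reducing the second term to $UV\Gamma^{-1}$, which exactly cancels the extra summand in the first term. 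The total collapses to $\Id$, completing the verification.

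I do not expect any genuine obstacle in this proof; the only subtlety is the conceptual one of recognizing which intermediate term to factor and in what order to group the operators so that the inverse $(\Id + V\Gamma^{-1}U)^{-1}$ can be applied cleanly. The hypothesis that $\Id + V\Gamma^{-1}U$ is invertible is used precisely once, at this cancellation step, and the hypothesis that $\Gamma$ is invertible is used to even write down the expressions $\Gamma^{-1}$ and $V\Gamma^{-1}U$. No dimensional compatibility issues arise because $U \in \R^{n \times m}$ and $V \in \R^{m \times n}$ make $V\Gamma^{-1}U \in \R^{m \times m}$, matching the $m \times m$ identity in $\Id + V\Gamma^{-1}U$.
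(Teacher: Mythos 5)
Your verification is correct and complete: the expansion, the factorization of $U + UV\Gamma^{-1}U$ as $U(\Id + V\Gamma^{-1}U)$, and the observation that a right inverse of a square matrix is automatically two-sided all check out. The paper itself gives no proof of this lemma --- it states the Woodbury identity as a known result with a citation --- so there is nothing to compare against; your direct multiplication argument is the standard one and fully establishes both the non-singularity claim and formula \eqref{eq:Woodbury}.
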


\begin{proposition} \label{prop:semi-banded:values}
    Matrices $P$ and $W$ of \eqref{KKT_conditions_eqQP_transformed} are semi-banded and can be decoupled as $P = \widehat{\Gamma} + \widehat{U} \widehat{V}$ and $W = \tilde{\Gamma} + \tilde{U} \tilde{V}$, where, denoting $Y \doteq -\mathds{1}_{N}\T \otimes \diag(Q, R)$,
    \begin{align*}
        \widehat{\Gamma} &= \diag(Q, R, Q, R\dots, NQ+T, NR+S) + \rho \Id, \\
        \widehat{U} &= \begin{bmatrix}
				Y\T & 0 \\
				0 & \Id_{(n_x+n_u)}
			\end{bmatrix}, \quad
        \widehat{V} = \begin{bmatrix}
				0 & \Id_{(n_x+n_u)} \\
				Y & 0
			\end{bmatrix}, \\
        \tilde{\Gamma} &= G \widehat{\Gamma}^{-1} G\T,\\
        \tilde{U} &= -G \widehat{\Gamma}^{-1} \widehat{U} ( \Id {+} \widehat{V} \widehat{\Gamma}^{-1} \widehat{U} )^{-1}, \tilde{V} = \widehat{V} \widehat{\Gamma}^{-1} G\T.
    \end{align*}
Moreover, provided that $G$ is full column rank, matrices $P$, $W$,  $\widehat{\Gamma}$ and $\tilde{\Gamma}$ are positive definite.  
\end{proposition}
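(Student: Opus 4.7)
The plan is to split the argument into four steps: verify the decomposition $P = \widehat{\Gamma} + \widehat U \widehat V$ directly, derive the decomposition of $W$ from it via the Woodbury identity, show positive definiteness of $\widehat{\Gamma}$ and $P$ from the structure of the cost, and obtain positive definiteness of $\tilde{\Gamma}$ and $W$ from the rank assumption on $G$. The semi-banded property of Definition~\ref{def:semi} will then follow for free once the decompositions are in hand.

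For the decomposition of $P$, I would compute $\widehat U \widehat V$ block-wise,
\[
\widehat U \widehat V = \begin{bmatrix} Y\T & 0 \\ 0 & \Id \end{bmatrix}\begin{bmatrix} 0 & \Id \\ Y & 0 \end{bmatrix} = \begin{bmatrix} 0 & Y\T \\ Y & 0 \end{bmatrix}.
\]
Since $Y = -\mathds{1}_N\T \otimes \diag(Q,R)$ is the block-row formed by $N$ copies of $-\diag(Q,R)$, placing $Y\T$ as the last block-column and $Y$ as the last block-row recovers exactly the off-diagonal coupling pattern of $H$ in~\eqref{eq:ADMM:MPCT:ingredients:H}. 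Because $\widehat{\Gamma}$ is the block diagonal of $H$ augmented with $\rho \Id$, this yields $P = H + \rho \Id = \widehat{\Gamma} + \widehat U \widehat V$. Positive definiteness of $\widehat{\Gamma}$ is immediate: its diagonal blocks $Q+\rho\Id$, $R+\rho\Id$, $NQ+T+\rho\Id$, $NR+S+\rho\Id$ are all positive definite since $Q,R,T,S$ are and $\rho>0$. For $P$, observe that $H$ is the Hessian of $\sum_i(\|x_i{-}x_s\|_Q^2 + \|u_i{-}u_s\|_R^2) + \|x_s\|_T^2 + \|u_s\|_S^2$, which is a sum of squared positive-definite forms, so $H \succeq 0$ and $P = H + \rho\Id \succ 0$.

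For $W$, I would feed $P = \widehat{\Gamma} + \widehat U \widehat V$ into Lemma~\ref{lemma:Woodbury}; invertibility of $P$ and $\widehat{\Gamma}$ guarantees invertibility of $\Id + \widehat V \widehat{\Gamma}^{-1}\widehat U$, and the identity gives
\[
P^{-1} = \widehat{\Gamma}^{-1} - \widehat{\Gamma}^{-1}\widehat U (\Id + \widehat V \widehat{\Gamma}^{-1}\widehat U)^{-1}\widehat V \widehat{\Gamma}^{-1}.
\]
Pre- and post-multiplying by $G$ and $G\T$ separates $W = G P^{-1} G\T$ into $\tilde{\Gamma} = G\widehat{\Gamma}^{-1}G\T$ plus the correction $\tilde U \tilde V$ displayed in the statement. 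Positive definiteness of $\tilde{\Gamma}$ and $W$ then follows from the identities $y\T \tilde{\Gamma} y = \|G\T y\|_{\widehat{\Gamma}^{-1}}^2$ and $y\T W y = \|G\T y\|_{P^{-1}}^2$ together with the injectivity of $G\T$ furnished by the rank hypothesis on $G$.

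Finally I would check Definition~\ref{def:semi}: $\widehat{\Gamma}$ is block-diagonal hence banded; $\tilde{\Gamma}$ is block-banded because $G$ is block-banded and $\widehat{\Gamma}^{-1}$ is block-diagonal; and the thin dimensions of $\widehat U, \widehat V, \tilde U, \tilde V$ are all $2(n_x+n_u)$, much smaller than the size $(N+1)(n_x+n_u)$ of $P$ for horizons of practical interest. The main obstacle I anticipate is the bookkeeping of block positions in the first step, namely confirming that the nonzero blocks of $\widehat U \widehat V$ land precisely in the last block-row and last block-column of $H$; once that pattern is verified, the rest is a routine application of Woodbury and standard quadratic-form arguments.
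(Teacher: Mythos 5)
Your proposal is correct and follows essentially the same route as the paper: read off the decomposition of $P$ from the structure of $H$ in \eqref{eq:ADMM:MPCT:ingredients:H}, obtain the decomposition of $W$ by pushing $P^{-1}$ through the Woodbury identity of Lemma~\ref{lemma:Woodbury}, deduce $H \succeq 0$ (hence $P \succ 0$) from the convexity of the quadratic cost, and get positive definiteness of $W$ and $\tilde{\Gamma}$ from the full column rank of $G$. Your explicit block computation of $\widehat U \widehat V$ and the quadratic-form identities for $W$ and $\tilde{\Gamma}$ merely spell out details the paper leaves implicit (or delegates to a citation, in the case of the banded structure of $\tilde{\Gamma}$).
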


\begin{proof}
The decomposition $P = \widehat{\Gamma} + \widehat{U} \widehat{V}$ immediately follows from the definition of $P \doteq H + \rho \Id$ and the semi-banded structure of $H$ shown in \eqref{eq:ADMM:MPCT:ingredients:H}.
Next, by applying \eqref{eq:Woodbury} to $P$ we have that
\begin{equation*}
    P^{-1} = \widehat{\Gamma}^{-1} - \widehat{\Gamma}^{-1} \widehat{U} (\Id + \widehat{V} \widehat{\Gamma}^{-1} \widehat{U})^{-1} \widehat{V} \widehat{\Gamma}^{-1}.
\end{equation*}
Thus, from the definition $W \doteq G P^{-1} G\T$, we have
\begin{align*}
    W &= G ( \widehat{\Gamma}^{-1} - \widehat{\Gamma}^{-1} \widehat{U} (\Id + \widehat{V} \widehat{\Gamma}^{-1} \widehat{U})^{-1} \widehat{V} \widehat{\Gamma}^{-1} ) G\T \\
    &= G \widehat{\Gamma}^{-1} G\T - G \widehat{\Gamma}^{-1} \widehat{U} (\Id + \widehat{V} \widehat{\Gamma}^{-1} \widehat{U})^{-1} \widehat{V} \widehat{\Gamma}^{-1} G\T,
\end{align*}
from where the claim $W = \tilde{\Gamma} + \tilde{U} \tilde{V}$ then follows from the definitions of $\tilde{\Gamma}$, $\tilde{U}$ and $\tilde{V}$.
Finally, the fact that $\tilde{\Gamma}$ is banded-diagonal follows from the banded-diagonal structures of $\widehat{\Gamma}$ and $G$ \eqref{eq:ADMM:MPCT:ingredients:G}, as shown in \cite[Eq. (33)]{Krupa_TCST_20}.

We notice that $H$ is the matrix that corresponds to the quadratic terms of MPC formulation \eqref{MPCT_formulation}. From the convexity of the quadratic cost codified by $H$, we infer that $H\succeq 0$. Therefore, $P=H+\rho \Id \succeq \rho \Id\succ 0$. Also, from the positive definite nature of matrices $Q$, $R$, $S$ and $T$, we have that the block-diagonal matrix $\widehat{\Gamma}$ satisfies $\widehat{\Gamma}\succeq \rho \Id \succ 0$. Since $P$ is positive definite, $W=GP^{-1} G\T$ is also positive definite provided that $G$ is full column rank. The same argument applies to  $\tilde{\Gamma} = G \widehat{\Gamma}^{-1} G\T$.
\end{proof}

Next, we show in the following proposition that \eqref{eq:system:semi}, and thus \eqref{KKT_conditions_eqQP_transformed} by virtue of Proposition~\ref{prop:semi-banded:values}, can be solved by means of Algorithm~\ref{semibanded_algorithm}.
This result also follows from Lemma~\ref{lemma:Woodbury}.

\begin{proposition} \label{prop:solve:system:semi}
Consider the semi-banded system $M z = d$ of Definition~\ref{def:semi} and its decomposition $M = \Gamma + U V$, where both $M$ and $\Gamma$ are non-singular matrices.
Algorithm~\ref{semibanded_algorithm} returns a solution $\tilde{z}$ satisfying $M \tilde{z} = d$.
\end{proposition}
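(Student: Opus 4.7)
The plan is to derive correctness of Algorithm~\ref{semibanded_algorithm} as a direct algorithmic rewriting of the Woodbury matrix identity (Lemma~\ref{lemma:Woodbury}). Since $M = \Gamma + U V$ is non-singular and $\Gamma$ is non-singular by hypothesis, the matrix determinant lemma gives $\det(M) = \det(\Gamma)\det(\Id + V\Gamma^{-1} U)$, so $\Id + V\Gamma^{-1} U \in \R^{m\times m}$ is also non-singular. This is exactly the side condition needed to invoke Lemma~\ref{lemma:Woodbury}.

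With this in place, Lemma~\ref{lemma:Woodbury} yields the closed-form expression
\begin{equation*}
    M^{-1} d = \Gamma^{-1} d - \Gamma^{-1} U \bigl(\Id + V\Gamma^{-1} U\bigr)^{-1} V \Gamma^{-1} d.
\end{equation*}
Because $M$ is non-singular, the solution of $Mz = d$ is unique and equals the right-hand side. The remainder of the proof is bookkeeping: I would read off the intermediate quantities of Algorithm~\ref{semibanded_algorithm} in order and identify them with pieces of this formula. Concretely, let $\alpha \doteq \Gamma^{-1} d$ be produced by a first banded solve $\Gamma \alpha = d$; let $\beta \doteq (\Id + V\Gamma^{-1} U)^{-1} V\alpha$ be produced by a small $m \times m$ solve; and let $\gamma \doteq \Gamma^{-1} (U\beta)$ be produced by a second banded solve $\Gamma \gamma = U\beta$. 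The algorithm's output $\tilde{z} = \alpha - \gamma$ then coincides with $M^{-1} d$, so $M\tilde{z} = d$ as claimed.

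There is no substantive obstacle: the statement is essentially a packaging of Lemma~\ref{lemma:Woodbury} into a three-solve procedure that exploits the banded structure of $\Gamma$ and the smallness of $m$. The only subtlety is the a priori well-posedness of the inner inverse $(\Id + V\Gamma^{-1} U)^{-1}$ used in the algorithm, which, as already noted, is handled at the start by the matrix determinant lemma. An explicit verification by multiplying $M$ by the candidate $\tilde{z}$ and collapsing $UV\Gamma^{-1} U (\Id + V\Gamma^{-1} U)^{-1}$ via the identity $UV\Gamma^{-1} U = U(\Id + V\Gamma^{-1} U) - U$ gives a self-contained check that avoids quoting Lemma~\ref{lemma:Woodbury} directly, but invoking the lemma seems the cleanest route given that it has just been stated.
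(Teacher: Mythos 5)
Your proposal is correct and follows essentially the same route as the paper: establish non-singularity of $\Id + V\Gamma^{-1}U$ via the determinant identity $\det(\Gamma + UV) = \det(\Gamma)\det(\Id + V\Gamma^{-1}U)$, invoke Lemma~\ref{lemma:Woodbury} to write $M^{-1}d$ in closed form, and identify the three solves of Algorithm~\ref{semibanded_algorithm} with the pieces of that formula. Your quantities $\alpha$, $\beta$, $\gamma$ are exactly the paper's $z_1$, $z_2$, $z_3$, so there is nothing to add.
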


\begin{proof}
From $\det(M)\neq 0$ and $\det(\Gamma)\neq 0$ we obtain
\begin{eqnarray}
0& \neq & \det(\Gamma+UV) = \det(\Gamma)\det(\Id_n+\Gamma^{-1} UV) \nonumber\\
& = & \det(\Gamma)\det(\Id_m+V\Gamma^{-1}U), \label{equ:det:not:zero}
\end{eqnarray}
where the last equality is due to the well-known identity $\det(\Id_n+AB)=\det(\Id_m+BA)$, $\forall A\in\mathbb{R}^{n\times m}, \forall B\in \mathbb{R}^{m\times n}$. 
Thus, we infer from \eqref{equ:det:not:zero} that $\Id_m + V \Gamma^{-1} U$ is non-singular. From this, and \eqref{eq:Woodbury}, we have that $M^{-1} = (\Gamma + U  V)^{-1}$ can be written as 
$M^{-1} = \Gamma^{-1} - \Gamma^{-1} U (\Id_m + V \Gamma^{-1} U)^{-1} V \Gamma^{-1}$.
Therefore,
\begin{equation} \label{x_definition}
    \tilde{z} = M^{-1} d =  \Gamma^{-1} d - \Gamma^{-1} U (\Id_m+ V \Gamma^{-1} U)^{-1} V \Gamma^{-1} d.
\end{equation}
Defining $z_1 = \Gamma^{-1} d$, $z_2 = (\Id_m + V \Gamma^{-1} U)^{-1} V \Gamma^{-1} d$ and $z_3 = \Gamma^{-1} U (\Id_m + V \Gamma^{-1} U)^{-1} V \Gamma^{-1} d$, we obtain Step~\ref{SB_alg_step_1} of Algorithm~\ref{semibanded_algorithm} by the definition of $z_1$, Step \ref{SB_alg_step_2} by including the definition of $z_1$ into $z_2$, and Step \ref{SB_alg_step_3} by substituting the definition of $z_2$ into $z_3$.
Finally, $\tilde{z} = z_1 - z_3$ by substituting the definitions of $z_1$ and $z_3$ into \eqref{x_definition}.
\end{proof}

We notice that the computation of $z_1$ and $z_3$ in Algorithm~\ref{semibanded_algorithm} can be done efficiently by exploiting the banded structure of $\Gamma$, e.g., using a banded Cholesky decomposition if $\Gamma$ is positive definite \cite{Krupa_TCST_20}.
Moreover, since $\Id_m + V \Gamma^{-1} U\in \mathbb{R}^{m\times m}$, and we assume that $m \ll n$, $z_2$ is the solution of a small-dimensional linear system, which is thus computationally cheap to solve in comparison to Steps~\ref{SB_alg_step_1} and \ref{SB_alg_step_3} of Algorithm~\ref{semibanded_algorithm}.

\begin{algorithm}[t]
	\DontPrintSemicolon
	\caption{Solve semi-banded system  $(\Gamma+UV)\tilde{z}{=}d$} \label{semibanded_algorithm}
	\Require{$\Gamma$, $U$, $V$, $d$}
    Compute $z_1$ solving $\Gamma z_1 = d$\; \label{SB_alg_step_1}
    Compute $z_2$ solving $(\Id + V \Gamma^{-1} U) z_2 = V z_1$\; \label{SB_alg_step_2}
    Compute $z_3$ solving $\Gamma z_3 = U z_2$\; \label{SB_alg_step_3}
    \KwOut{$\tilde{z} \gets z_1-z_3$}
\end{algorithm}

\begin{corollary}
By means of Proposition~\ref{prop:solve:system:semi}, the optimal solution $z^*$ of problem \eqref{z_problem} can be obtained by using Algorithm~\ref{semibanded_algorithm} to solve the three linear systems \eqref{KKT_conditions_eqQP_transformed} using the decomposition of matrices $P$ and $W$ provided in Proposition~\ref{prop:semi-banded:values}.
\end{corollary}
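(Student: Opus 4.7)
The plan is to chain together the three results already established: the KKT reformulation \eqref{KKT_conditions_eqQP_transformed}, the explicit semi-banded decompositions of $P$ and $W$ from Proposition~\ref{prop:semi-banded:values}, and the semi-banded solver of Proposition~\ref{prop:solve:system:semi}. First I would recall that obtaining $z^*$ amounts to sequentially solving the three linear systems \eqref{KKT_xi_computation}--\eqref{KKT_z_computation}: one for $\xi$ with coefficient matrix $P$, one for $\mu$ with coefficient matrix $W$, and a third one for $z^*$ with coefficient matrix $P$ again. Each of these is a linear system of the form $Mz=d$ to which Algorithm~\ref{semibanded_algorithm} can, in principle, be applied.

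Next I would verify that the hypotheses of Proposition~\ref{prop:solve:system:semi} are met for each system. For the $\xi$ and $z^*$ updates, the matrix is $P = \widehat{\Gamma} + \widehat{U}\widehat{V}$; for the $\mu$ update, the matrix is $W = \tilde{\Gamma} + \tilde{U}\tilde{V}$. Proposition~\ref{prop:semi-banded:values} provides exactly these decompositions and moreover asserts, under the standing assumption that $G$ has full column rank, that both $P$ and $W$ are positive definite and that $\widehat{\Gamma}$ and $\tilde{\Gamma}$ are positive definite and banded. Thus the non-singularity of $M$ and $\Gamma$ required by Proposition~\ref{prop:solve:system:semi} holds for both systems, so the algorithm returns the exact solution of each.

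Putting these facts together, three invocations of Algorithm~\ref{semibanded_algorithm}---one with $(\Gamma,U,V,d)=(\widehat{\Gamma},\widehat{U},\widehat{V},p)$ to recover $\xi$, one with $(\tilde{\Gamma},\tilde{U},\tilde{V},-(G\xi+b))$ to recover $\mu$, and a final one with $(\widehat{\Gamma},\widehat{U},\widehat{V},-(G\T\mu+p))$ to recover $z^*$---yield the optimal solution of \eqref{z_problem}. Since the statement is essentially a transcription of this chain of implications, I do not anticipate any real obstacle; the only point that deserves an explicit mention is the non-singularity of both semi-banded matrices and of their banded parts, which is precisely what the last sentence of Proposition~\ref{prop:semi-banded:values} guarantees.
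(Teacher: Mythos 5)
Your proposal is correct and follows exactly the chain of implications the paper intends: the corollary is stated as an immediate consequence of Propositions~\ref{prop:semi-banded:values} and~\ref{prop:solve:system:semi} applied to the three systems \eqref{KKT_conditions_eqQP_transformed}, and the paper itself offers no further argument. Your explicit verification of the non-singularity hypotheses (via the positive definiteness of $P$, $W$, $\widehat{\Gamma}$ and $\tilde{\Gamma}$) is a welcome, if routine, addition.
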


\begin{remark} \label{rem:structures}
Matrix $\widehat{\Gamma}\succ 0$ in Proposition~\ref{prop:semi-banded:values} is block-diagonal.
Therefore, Steps~\ref{SB_alg_step_1} and \ref{SB_alg_step_3} of Algorithm~\ref{semibanded_algorithm} applied to solve \eqref{KKT_xi_computation} and \eqref{KKT_z_computation} are very simple.
On the other hand, matrix $\tilde{\Gamma}\succ 0$ is banded-diagonal, but not block-diagonal.
However, $\tilde{\Gamma}$ has the same banded-diagonal structure that is exploited by the solvers proposed in \cite{Krupa_TCST_20, krupa2021implementation}.
Therefore, Steps~\ref{SB_alg_step_1} and \ref{SB_alg_step_3} of Algorithm~\ref{semibanded_algorithm} applied to \eqref{KKT_mu_computation} can be solved by computing the banded Cholesky decomposition of $\tilde{\Gamma}$ and using \cite[Alg. 11]{krupa2021tesis}.
\end{remark}

\begin{remark}
Note that the operations $G \xi$ and $G\T \mu$ in \eqref{KKT_mu_computation} and \eqref{KKT_z_computation} can be performed sparsely, since $G$ \eqref{eq:ADMM:MPCT:ingredients:G} is sparse.
\end{remark}

\subsection{Computation of $v^{k+1}$}

Variable $v^{k+1}$ updated in Step~\ref{step_v_ADMM} of Algorithm~\ref{ADMM_algorithm}, when $g(v)$ is given by \eqref{eq:ADMM:MPCT:g}, is taken from the optimal solution of
\begin{equation} \label{QP_ineq_constrained}
    \begin{aligned}
        \min_{v \in \R^{n_z}} &\frac{\rho}{2} \sum_{j=1}^{n_z} (v_{(j)}^2 - 2z^{k+1}_{(j)}v_{(j)}) - \sum_{j=1}^{n_z} \lambda^k_{(j)} v_{(j)} \\
  \textrm{s.t.} &\; \LBx \leq x_{i} \leq \UBx, \ i \in \mathbb{I}_1^{N-1}, \\
        & \;  \LBu \leq u_{i} \leq \UBu, \ i \in \mathbb{I}_0^{N-1},\\
		& \; \eLBx \leq x_{s} \leq \eUBx,\\
		& \; \eLBu \leq u_{s} \leq \eUBu,
    \end{aligned}
\end{equation}
which is separable for each decision variable $v_{(j)}$.
Indeed, each element $v_{(j)}^{k+1}$, $j \in \mathbb{I}_1^{n_z}$, of $v^{k+1}$ is given by
\begin{equation} \label{eq:update_v}
    v_{(j)}^{k+1} = \min \left( \max \left( z_{(j)}^{k+1} + \frac{1}{\rho} \lambda_{(j)}^k,\underline{v}_{(j)} \right),\overline{v}_{(j)} \right).
\end{equation}

\subsection{Comparison with the EADMM-based solver}

\begin{algorithm}[t]
	\DontPrintSemicolon
	\caption{Efficient ADMM applied to MPCT \eqref{MPCT_formulation}} \label{alg:ADMM_for_MPCT}
	\Require{$\rho > 0$, $\epsilon_{p} > 0$, $\epsilon_{d} > 0$, $Q$, $R$, $S$, $T$, $N$}
    \Input{$x(t)$, $x_r$, $u_r$, $v^{0}$, $\lambda^{0}$,}
	$k \gets 0$\;
    Compute $q$ and $b$ in \eqref{eq:ADMM:MPCT:f} using $x(t)$, $x_r$ and $u_r$.\;
	\Repeat{$\|z^{k+1} - v^{k+1}\|_{\infty} \leq \epsilon_{p}$ and $\|v^{k+1} -v^{k}\|_{\infty} \leq \epsilon_{d}$}{
        $p \gets q + \lambda^k - \rho v^k$\;
        $\xi \gets$ solution of $P \xi = p$ using Alg.~\ref{semibanded_algorithm} \label{alg:ADMM_for_MPCT:xi}\;
        $\mu \gets$ solution of $W \mu = -(G \xi + b)$ using Alg.~\ref{semibanded_algorithm} \label{alg:ADMM_for_MPCT:mu}\;
		$z^{k+1} \gets$ solution of $P z {=} -(G\T \mu + p)$ using Alg.~\ref{semibanded_algorithm}\label{alg:ADMM_for_MPCT:z}\;
        Update $v^{k+1}$ using \eqref{eq:update_v} \label{alg:ADMM_for_MPCT:v}\;
        $\lambda^{k+1} \gets \lambda^{k} + \rho(z^{k+1} - v^{k+1})$\;
		$k \gets k+1$\;
	}		
	\KwOut{$u(t) \gets$ elements of $v^k$ corresponding to $\tilde{u}_0$}
\end{algorithm}

Algorithm~\ref{alg:ADMM_for_MPCT} shows the particularization of Algorithm~\ref{ADMM_algorithm} applied to the MPCT problem \eqref{MPCT_formulation} using the results presented in the previous subsections.
Steps~\ref{alg:ADMM_for_MPCT:xi}, \ref{alg:ADMM_for_MPCT:mu} and \ref{alg:ADMM_for_MPCT:z} of Algorithm~\ref{alg:ADMM_for_MPCT} are its main computational burden.
They make use of Algorithm~\ref{semibanded_algorithm} to solve the three semi-banded linear systems in \eqref{KKT_conditions_eqQP_transformed} to update the decision variables $z^{k+1}$.
As discussed in Remark~\ref{rem:structures}, Step~\ref{alg:ADMM_for_MPCT:mu} requires solving two linear systems whose matrix has the structure obtained when solving standard MPC formulations.
The EADMM solver \eqref{MPCT_formulation} proposed in \cite{krupa2021implementation} also recovers this very same banded structure.
However, it only needs to solve it once, instead of twice.
Additionally, the EADMM solver needs a problem computationally equivalent to Step~\ref{alg:ADMM_for_MPCT:v} of Algorithm~\ref{alg:ADMM_for_MPCT} and a $(n_x + n_u)$ dense linear system, whereas Algorithm~\ref{alg:ADMM_for_MPCT} needs to solve three 
$2(n_x + n_u)$-dimensional linear system (one for each call to Algorithm~\ref{semibanded_algorithm}) and four block-diagonal systems $\hat{\Gamma} z = d$ (two in Step~\ref{alg:ADMM_for_MPCT:xi} and two in Step~\ref{alg:ADMM_for_MPCT:z}).

We conclude that the computational cost per iteration of Algorithm~\ref{alg:ADMM_for_MPCT} is more than double the one of the EADMM solver proposed in \cite{krupa2021implementation}.
Therefore, ADMM should require, on average, less than half the number of iterations than the EADMM solver from \cite{krupa2021implementation} to be computationally better.
However, the are several advantages to using ADMM instead of EADMM.
First, the convergence of EADMM is only guaranteed if its step-size $\rho$ belongs to a certain range that depends on the properties of the OP \cite[Theorem~3.1]{cai2017convergence}.
However, in practice the EADMM algorithm typically performs very poorly when using  values of $\rho$ satisfying this theoretical condition.
On the other hand, the value of $\rho$ for Algorithm~\ref{alg:ADMM_for_MPCT} can be freely chosen, thus improving the practical performance of the algorithm.
Second, the theoretical convergence results for ADMM are better than the current ones available for EADMM, leading also to a better worst-case iteration complexity.
Finally, even when choosing values of $\rho$ for EADMM according to \cite[Theorem~3.1]{cai2017convergence}, the number of iterations required by the algorithm in practice is typically much larger than the ones required by ADMM when applied to the same OP.

\section{Numerical results} \label{sec:numerical_results}

We provide a computational comparison between the proposed Algorithm~\ref{alg:ADMM_for_MPCT} and the EADMM solver for MPCT presented in \cite{krupa2021implementation}.
We consider the ball and plate system presented in \cite[\S V.A]{9309006}, which consists of a ball whose position on a (nominally) horizontal plate is controlled by motors on each of its two main axes.
Consequently, the system has two inputs, angular accelerations [rad/s$^2$] of the motors, and eight states, position [m] and velocity [m/s] of the ball with respect to each axis, as well as angular position [rad] and velocity [rad/s] of the plate in each axis.
The physical parameters of the system are the same as in \cite[\S V.A]{9309006}, as well as the sample time of $0.2$ seconds.
We take the constraints \eqref{box_limits} as $\UBu = (0.2,0.2)$, $\LBu = -\overline{u}$, $\UBx = (2,1,0.785,\infty,2,1,0.785,\infty)$, $\LBx = -(0,1,0.785,\infty,0,1,0.785,\infty)$, and, for \eqref{MPCT_formulation}, $N {=} 30$, $\varepsilon {=} 10^{-6}$, $R = \diag(0.5,0.5)$, $S = \diag(0.3,0.3)$,
\begin{align*}
    Q &= \diag(10,0.05,0.05,0.05,10,0.05,0.05,0.05), \\
    T &= \diag(200,50,50,50,200,50,50,50),
\end{align*}
where the order of the state and input elements are taken from~\cite{9309006}.
Additionally, we take the exit tolerances of Algorithm~\ref{alg:ADMM_for_MPCT} and EADMM as $10^{-4}$, i.e., $\epsilon_p = \epsilon_d = 10^{-4}$.
Finally, we note that we use the scaling of the system model used in \cite{krupa2021implementation} to improve the numerical conditioning of \eqref{MPCT_formulation}.

{\renewcommand{\arraystretch}{0.72}%
\begin{table*}[t]
    \centering
    \footnotesize
    \begin{tabular}{ccc|cccc|cccc}
        & & & \multicolumn{4}{c|}{Iterations} &   \multicolumn{4}{c}{Computation time [ms]}  \\ \hline
        Algorithm & $\rho$ & Reference & Average         & Median        & Max.    & Min. & Avg. & Median & Max. & Min.
        \\ \hline 
        ADMM     & 0.1        & Reachable  &  14.2        &  7.0           & 	73.0           &  5.0 & 0.52 & 0.24 & 4.14 & 0.18\\
        ADMM     & 0.6        & Reachable  &  16.5        &  17.0           & 	18.0           &  11.0 & 0.60 & 0.60 & 0.12 & 0.39\\
        ADMM     & 2        & Reachable  & 37.8        &  39.0           & 	43.0               & 22.0 & 1.36 & 1.37 & 2.55 & 0.78\\
        EADMM    &  4        & Reachable  & 222.3    &  225.0          &      298.0            & 119.0 & 1.65 & 1.68 & 2.94 & 0.87\\
        EADMM    &  6           & Reachable  & 178.5    &  180.0          &   206.0            & 131.0 & 1.34 & 1.34 & 2.24 & 0.97\\
        EADMM    &  10           & Reachable  & 228.7   &  231.0          &   280.0            & 173.0 & 1.66 & 1.66 & 2.57 & 1.24\\
        ADMM     & 4        & Unreachable  & 90.5         & 80.0           & 721.0             & 68.0 & 3.24 & 2.87 & 25.90 & 2.41\\
        ADMM     & 6        & Unreachable  &  112.5        & 110.0            & 483.0	       & 91.0  & 4.03 & 3.91 & 17.45 & 3.23\\
        ADMM     & 10        & Unreachable  &  159.3        & 160.0            & 319.0  	   & 126.0 & 5.61 & 5.64 & 12.56 & 4.43\\
        EADMM    &  2        & Unreachable  & 717.6   &   662.0         &  1418.0              & 526.0 & 5.31 & 4.90 & 10.43 & 3.84\\
        EADMM    &  6           & Unreachable  & 235.3  &  221.0          &  459.0             & 193.0 & 1.85 & 1.74 & 3.69 & 1.44\\
        EADMM    &  10           & Unreachable  & 289.0    &  295.0          &  343.0          & 227.0 & 2.08 & 2.12 & 3.13 & 1.65\\
        \hline
    \end{tabular}\\
\caption{\small Performance of Algorithm~\ref{alg:ADMM_for_MPCT} (ADMM) and EADMM \cite{krupa2021implementation} applied to the ball and plate system for random current states.}
\label{table:iterations_comparisson}
\end{table*}}

Using version \texttt{v0.3.11} of the Spcies Toolbox \cite{Spcies} in an I5-1135G7 laptop, Table~\ref{table:iterations_comparisson} shows results on the number of iterations and computation time of Algorithm~\ref{alg:ADMM_for_MPCT} and the EADMM solver from \cite{krupa2021implementation} (for different values of $\rho$) when applied to the above system for $500$ random initial states, where the position and velocity of the ball are taken from a uniform distribution in the intervals $[0.3, 1.8]$ and $[-0.2, 0.2]$, respectively.
Additionally, we consider two references: the \textit{Reachable} reference $x_r = (1, 0, 0, 0, 0.8, 0, 0, 0)$, and the \textit{Unreachable} reference $x_r = (2.15, 0, 0, 0, 2.2, 0, 0, 0)$, which violates the constraints on the position of the ball.
In both cases $u_r = (0, 0)$.

The results indicate that Algorithm~\ref{alg:ADMM_for_MPCT} is noticeably faster than the EADMM solver when the reference is \textit{Reachable}. However, when it is \textit{Unreachable}, the number of iterations and computation times grow considerably for Algorithm~\ref{alg:ADMM_for_MPCT}. 
Even though ADMM seems less efficient in that specific case, asymptotic convergence of the algorithm is guaranteed for any positive value of $\rho$ \cite{MAL-016}, while EADMM applied to our case study requires $\rho \in (0,0.0176]$ (see \cite[Theorem~3.1]{cai2017convergence}). If we take $\rho$ for EADMM in the range such that its asymptotic convergence is guaranteed, Algorithm~\ref{alg:ADMM_for_MPCT} outperforms EADMM in both the \textit{Reachable} and \textit{Unreachable} cases, as it leads to a poor performance of EADMM.

\begin{figure*}[t]
    \centering
    \begin{subfigure}{0.47\textwidth}
        \includegraphics[scale = 0.45]{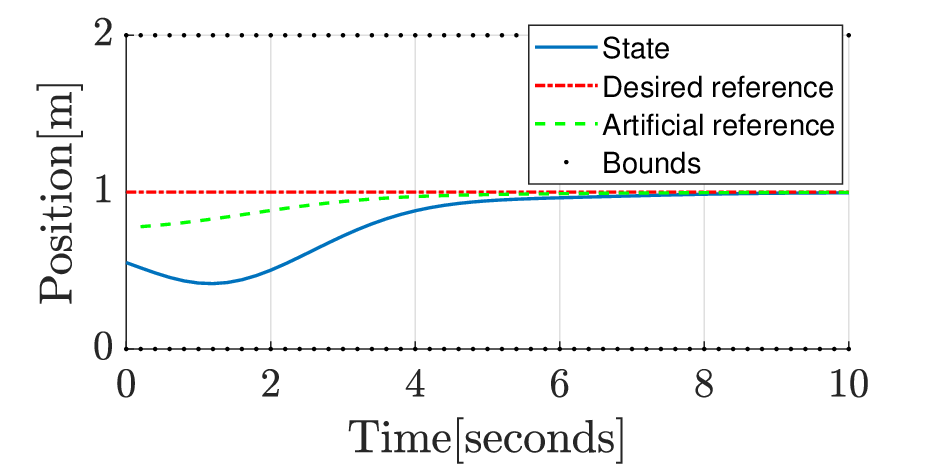}
        \caption{Position in the horizontal axis.}
    \end{subfigure}%
    \hfill
    \begin{subfigure}{0.47\textwidth}
        \includegraphics[scale = 0.45]{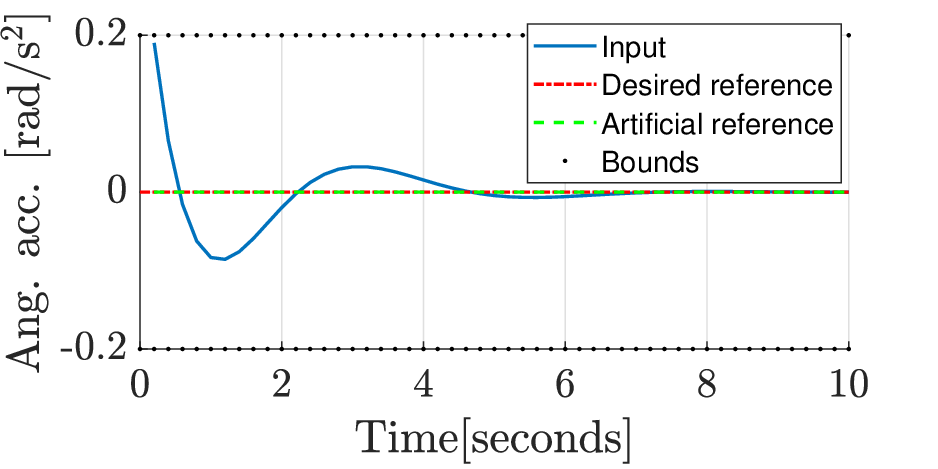}
        \caption{Angular acceleration of the horizontal axis.}
    \end{subfigure}%
    \caption{Closed-loop trajectory of the ball and plate system with \eqref{MPCT_formulation} for an admissible reference.}
    \label{fig:feasible experiment}
\end{figure*}

\begin{figure*}[t]
    \centering
    \begin{subfigure}{0.47\textwidth}
        \includegraphics[scale = 0.45]{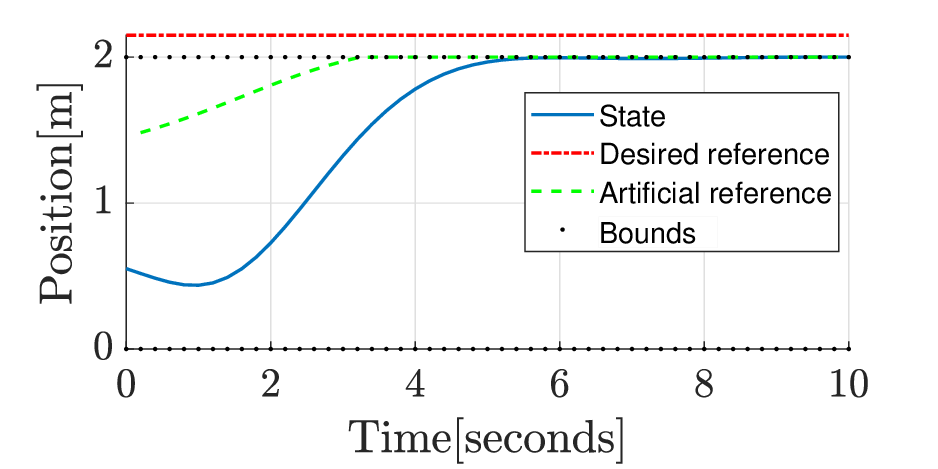}
        \caption{Position in the horizontal axis.}
    \end{subfigure}%
    \hfill
    \begin{subfigure}{0.47\textwidth}
        \includegraphics[scale = 0.45]{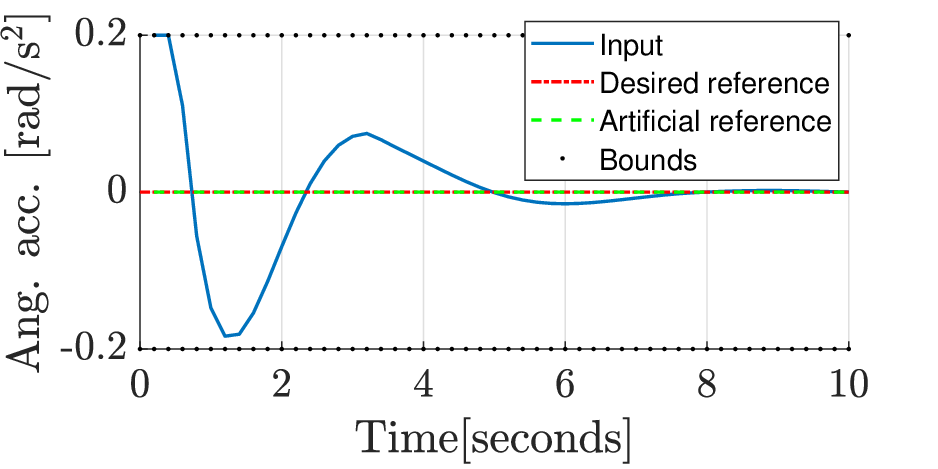}
        \caption{Angular acceleration of the horizontal axis.}
    \end{subfigure}%
    \caption{Closed-loop trajectory of the ball and plate system with \eqref{MPCT_formulation} for a non-admissible reference.}
    \label{fig:unfeasible experiment}
\end{figure*}

Figure~\ref{fig:feasible experiment} illustrates the closed-loop evolution of the system controlled by applying \eqref{MPCT_formulation} to one of the $500$ tests with \textit{Reachable} reference from Table~\ref{table:iterations_comparisson}.
Figure~\ref{fig:unfeasible experiment} shows the closed-loop evolution considering instead the \textit{Unreachable} reference.
As mentioned in Section~\ref{sec:mpct_formulation}, Figure~\ref{fig:unfeasible experiment} shows that when the reference is infeasible, the MPCT formulation \eqref{MPCT_formulation} steers the system to its closest admissible steady-state.

\section{Conclusion} \label{sec:conclusion}

This article has presented an efficient method for solving the MPCT formulation using ADMM.
We have shown how the semi-banded structure of the most computationally-expensive step of the ADMM algorithm can be solved efficiently by decomposing it into three simpler-to-solve steps.
Indeed, the decomposition recovers the simple matrix structure exploited by several first-order solvers for standard MPC from the literature.
We have presented numerical results comparing the proposed solver with a recent EADMM-based MPCT solver from the literature, showing that the proposed approach may outperform the EADMM algorithm from a computational point of view.
This, along with the solid theoretical convergence guarantees of ADMM, results in a rather sensible and efficient solution for the practical implementation of MPCT. 

\bibliographystyle{IEEEtran}
\bibliography{IEEEabrv,semibanded_MPCT_ECC}

\begin{thebibliography}{10}
\providecommand{\url}[1]{#1}
\csname url@samestyle\endcsname
\providecommand{\newblock}{\relax}
\providecommand{\bibinfo}[2]{#2}
\providecommand{\BIBentrySTDinterwordspacing}{\spaceskip=0pt\relax}
\providecommand{\BIBentryALTinterwordstretchfactor}{4}
\providecommand{\BIBentryALTinterwordspacing}{\spaceskip=\fontdimen2\font plus
\BIBentryALTinterwordstretchfactor\fontdimen3\font minus
  \fontdimen4\font\relax}
\providecommand{\BIBforeignlanguage}[2]{{%
\expandafter\ifx\csname l@#1\endcsname\relax
\typeout{** WARNING: IEEEtran.bst: No hyphenation pattern has been}%
\typeout{** loaded for the language `#1'. Using the pattern for}%
\typeout{** the default language instead.}%
\else
\language=\csname l@#1\endcsname
\fi
#2}}
\providecommand{\BIBdecl}{\relax}
\BIBdecl

\bibitem{camacho2013model}
E.~F. Camacho and C.~B. Alba, \emph{Model Predictive Control}.\hskip 1em plus
  0.5em minus 0.4em\relax Springer Science \& Business Media, 2013.

\bibitem{rawlings2017model}
J.~Rawlings, D.~Mayne, and M.~Diehl, \emph{Model Predictive Control: Theory,
  Computation, and Design}.\hskip 1em plus 0.5em minus 0.4em\relax Nob Hill
  Publishing, 2017.

\bibitem{Krupa_TCST_20}
P.~Krupa, D.~Limon, and T.~Alamo, ``Implementation of model predictive control
  in programmable logic controllers,'' \emph{IEEE Transactions on Control
  Systems Technology}, vol.~29, no.~3, p. 1117–1130, 2021.

\bibitem{stellato2020osqp}
B.~Stellato, G.~Banjac, P.~Goulart, A.~Bemporad, and S.~Boyd, ``{OSQP}: {A}n
  operator splitting solver for quadratic programs,'' \emph{Mathematical
  Programming Computation}, vol.~12, no.~4, pp. 637--672, 2020.

\bibitem{odonoghue_SCS}
B.~O'Donoghue, ``Operator splitting for a homogeneous embedding of the linear
  complementarity problem,'' \emph{{SIAM} Journal on Optimization}, vol.~31,
  pp. 1999--2023, August 2021.

\bibitem{Richter12}
S.~Richter, C.~Jones, and M.~Morari, ``Computational complexity certification
  for real-time {MPC} with input constraints based on the fast gradient
  method,'' \emph{IEEE Transactions on Automatic Control}, vol.~57, no.~6, pp.
  1391--1403, 2012.

\bibitem{Patrinos14}
P.~Patrinos and A.~Bemporad, ``An accelerated dual gradient-projection
  algorithm for embedded linear model predictive control,'' \emph{IEEE
  Transactions on Automatic Control}, vol.~59, no.~1, pp. 18--33, 2014.

\bibitem{LIMON20082382}
D.~Limon, I.~Alvarado, T.~Alamo, and E.~Camacho, ``{MPC} for tracking piecewise
  constant references for constrained linear systems,'' \emph{Automatica},
  vol.~44, no.~9, pp. 2382--2387, 2008.

\bibitem{kohler2022globally}
J.~K{\"o}hler and M.~N. Zeilinger, ``Globally stable and locally optimal model
  predictive control using a softened initial state constraint--extended
  version,'' \emph{arXiv preprint arXiv:2207.10216}, 2022.

\bibitem{wang2009fast}
Y.~Wang and S.~Boyd, ``Fast model predictive control using online
  optimization,'' \emph{IEEE Transactions on Control Systems Technology},
  vol.~18, no.~2, pp. 267--278, 2009.

\bibitem{krupa2021implementation}
P.~Krupa, I.~Alvarado, D.~Limon, and T.~Alamo, ``Implementation of model
  predictive control for tracking in embedded systems using a sparse extended
  {ADMM} algorithm,'' \emph{IEEE Transactions on Control Systems Technology},
  vol.~30, no.~4, pp. 1798--1805, 2021.

\bibitem{cai2017convergence}
X.~Cai, D.~Han, and X.~Yuan, ``On the convergence of the direct extension of
  {ADMM} for three-block separable convex minimization models with one strongly
  convex function,'' \emph{Computational Optimization and Applications},
  vol.~66, pp. 39--73, 2017.

\bibitem{MAL-016}
S.~Boyd, N.~Parikh, E.~Chu, B.~Peleato, and J.~Eckstein, ``Distributed
  optimization and statistical learning via the alternating direction method of
  multipliers,'' \emph{Foundations and Trends in Machine Learning}, vol.~3,
  no.~1, pp. 1--122, 2011.

\bibitem{boyd2004convex}
S.~Boyd and L.~Vandenberghe, \emph{Convex Optimization}.\hskip 1em plus 0.5em
  minus 0.4em\relax Cambridge University Press, 2004.

\bibitem{krupa2021tesis}
P.~Krupa, ``Implementation of {MPC} in embedded systems using first order
  methods,'' Ph.D. dissertation, University of Seville, 2021, available at
  arXiv:2109.02140.

\bibitem{1457855}
D.~Tylavsky and G.~Sohie, ``Generalization of the matrix inversion lemma,''
  \emph{Proceedings of the IEEE}, vol.~74, no.~7, pp. 1050--1052, 1986.

\bibitem{9309006}
P.~Krupa, D.~Limon, and T.~Alamo, ``Harmonic based model predictive control for
  set-point tracking,'' \emph{IEEE Transactions on Automatic Control}, vol.~67,
  no.~1, pp. 48--62, 2022.

\bibitem{Spcies}
P.~Krupa, V.~Gracia, D.~Limon, and T.~Alamo, ``{Spcies: Suite of Predictive
  Controllers for Industrial Embedded Systems},''
  \url{https://github.com/GepocUS/Spcies}, Dec 2020.

\end{thebibliography}
	
\end{document}